\documentclass[conference,11pt]{ieeeconf} 
\IEEEoverridecommandlockouts

\pdfminorversion=4

\usepackage{amsmath}
\usepackage{color}
\usepackage{algorithm}
\usepackage{tikz}
\usepackage{graphicx}
\usepackage{authblk}
\usepackage{mathtools}
\usepackage{tikz}
\usetikzlibrary{arrows,automata}
\usepackage{amsfonts}
\usepackage{algpseudocode}
\usepackage[noadjust]{cite}
\usepackage{multirow}
\usepackage{balance}
\usepackage{url}

\title{Distributed Robust Set-Invariance for Interconnected Linear Systems}
\author{Sadra Sadraddini and Calin Belta
\thanks{
Sadra Sadraddini and Calin Belta are with the Department of Mechanical Engineering, Boston University, Boston, MA. Emails: \{sadra,cbelta\}@bu.edu. This work was partially funded by NSF grants CPS-1446151 and CMMI-1400167.
}
}

\newtheorem{define}{Definition}

\newtheorem{problem}{Problem}
\newtheorem{lemma}{Lemma}
\newtheorem{remark}{Remark}
\newtheorem{assumption}{Assumption}
\newtheorem{theorem}{Theorem}

\DeclareMathOperator{\argmin}{{\arg \min}}
\DeclareMathOperator{\argmax}{{\arg \max}}
\DeclareMathOperator{\st}{{~s.t.~}}

\bibliographystyle{IEEEtran}

\begin{document}
\maketitle

\thispagestyle{empty}
\pagestyle{empty}

\begin{abstract}
We introduce a class of distributed control policies for networks of discrete-time linear systems with polytopic additive disturbances. The objective is to restrict the network-level state and controls  to user-specified polyhedral sets for all times. This problem arises in many safety-critical applications. We consider two problems. First, given a communication graph characterizing the structure of the information flow in the network, we find the optimal distributed control policy by solving a single linear program. Second, we find the sparsest communication graph required for the existence of a distributed invariance-inducing control policy. Illustrative examples, including one on platooning, are presented. 

\end{abstract}

\section{Introduction}

Centralized control of large-scale networked systems requires all the subsystems to communicate with a central coordinator, an entity which has to promptly compute control decisions for all subsystems, making centralized control impractical. Distributed control policies - where the computation and communication loads of subsystems are limited - are preferred in practice.

Designing structured controllers is difficult. A structural constraint states whether a subsystem can communicate with another subsystem. For linear interconnected systems subject to additive disturbances, which are common in applications such as formation control and energy managament, it is well-known that the problem of designing optimal stabilizing controllers (e.g., in a $\mathcal{H}_2$ or $\mathcal{H}_\infty$ sense) subject to structural constraints is NP-hard \cite{lin2013design}. Numerous methods have been proposed to design static feedback gains that respect structural constraints or lead to sparse structural requirements \cite{lin2013design,fardad2014design,fattahi2015transformation,arastoo2016closed,fazelnia2017convex,fattahiscalable}. However, since the set of stabilizing feedback gains is, in general, non-convex, the problem is computationally challenging. Moreover, a serious drawback of current methods is that they cannot take state and input constraints into account while disturbances are also present. Set-invariance specifications \cite{Blanchini:1999aa} require the system state to always remain in a user-specified safe set while inputs take values from a user-specified admissible set. In many safety-critical applications, constraint satisfaction is even more important than stabilization. For example, guaranteeing collision avoidance while respecting physical input limits is essential in vehicular platooning. The current methods of designing static feedback gains do not allow correct-by-design constraint satisfaction, and one has to test the stabilizing controller to see whether they fulfill the constraints. This process can be expensive.

In this paper, polytopic set-invariance is the main objective. It is well-known that all invariance-inducing controllers may not be described using a finite number of parameters \cite{Blanchini:1999aa}. We use the framework in \cite{rakovic2007optimized} to characterize convex sets of parameters guaranteeing set-invariance. We propose a method to impose structural constraints on the parameters. Unlike the traditional approaches discussed earlier, we require that subsystems act as relay nodes while passing information in the network. The delay of such relaying processes is taken into account in the design of the controller. In this paper, we establish the following two main results:
\begin{itemize}
\item Given a directed communication graph describing the structural constraints of the network, \emph{our method designs control policies using linear programming}. The number of constraints and variables scale polynomially with the problem size.
\item When structural constraints are not given, we find a minimal communication graph - in the sense that a weighted sum of (one-way) communication links is minimized - for which a distributed invariance-inducing control policy exists. The problem can be both solved exactly using a mixed-integer linear program or approximately solved using linear-programming relaxations. 
\end{itemize}   

Decentralized set-invariance control was considered in \cite{rakovic2010practical, Nilsson2016seperable}. Decentralized policies do not take advantage of coordination between subsystems hence they can be conservative. Convex optimization of decentralized controllers for a class of systems was established using the notion of quadratic invariance in \cite{rotkowitz2006characterization,lessard2011quadratic}. Distributed model predictive controllers (MPCs) require a distributed set-invariance property for maintaining feasibility. The authors in \cite{summers2012distributed,conte2012distributed,wang2017distributed} studied distributed MPC, but disturbances were not modeled, which significantly eases computations as Lyapunov-based approaches are used. To this end, the problem of distributed set-invariance control subject to polytopic disturbances for networks that are coupled both by dynamics and constraints remained open. This paper introduces a class of solutions to this problem.

This paper is organized as follows. The problem is stated in Sec. \ref{sec_problem}. The parametrization of invariance inducing policies is explained in Sec. \ref{sec_rci}. Computing structured control policies and designing communication graphs are covered in Sec. \ref{sec_structured} and Sec. \ref{sec_design}, respectively. Examples are presented in Sec. \ref{sec_examples}.  

\section{Definitions and Problem Formulation}
\label{sec_problem}
\subsection*{Notation}
The set of real values, non-negative real values, non-negative integers, positive integers and Boolean values are denoted by $\mathbb{R}, \mathbb{R}_+, \mathbb{N}, \mathbb{N}_+$ and $\mathbb{B}$, respectively. Cardinality of a set $S$ is denoted by $|S|$. Given a set $\mathbb{X} \subset \mathbb{R}^n$, and a matrix $A \in \mathbb{R}^{q \times n}$, we interpret $A\mathbb{X}$ as $\left\{ Ax \in \mathbb{R}^q | x \in \mathbb{R}^n \right\}$. The unit infinity-norm ball in $\mathbb{R}^n$ is denoted by $\mathcal{B}^n_\infty=\{x \in \mathbb{R}^n | \|x\|_\infty \le 1 \}.$ Given two sets $\mathbb{X}, \mathbb{Y} \subset \mathbb{R}^n$, their Minkowski sum is denoted by $\mathbb{X} \oplus \mathbb{Y}=\{x+y | x \in \mathbb{X}, y\in \mathbb{Y}\}$. Given a matrix $A$, its $(i,j)^\text{th}$ entry is denoted by $A_{[i,j]}$. The partial order relation $\le$ between two matrices of same size is interpreted entry-wise.

\subsection{Networked Control System}
A networked control system $\mathcal{S}$ is defined as a set of interconnected subsystems. The discrete-time evolution of $s \in \mathcal{S}$ is given as:
\begin{subequations}
\label{eq_subsystem}
\begin{equation}
x_s[t+1]= A_s x_s[t] + B_s u_s[t] + w_s[t] + \sum_{s' \in \mathcal{S},s'\neq s} \zeta_{s's}[t],
\end{equation}
\begin{equation}
\zeta_{s's}[t]=A_{s's} x_{s'}[t]+B_{s's} u_{s'}[t],
\end{equation}
\end{subequations}
where $x_s[t] \in \mathbb{R}^{n_s}$, $u_s[t] \in \mathbb{R}^{m_s}$, $w_s[t]\in \mathbb{R}^{n_s}$, are the state, control and additive disturbance of system $s$, respectively, and $\zeta_{s's}[t]$ is the dynamical influence of $s'$ on $s$ at time $t \in \mathbb{N}$. Matrices $A_s \in \mathbb{R}^{n_s \times n_s}$, $B_s \in \mathbb{R}^{n_s \times m_s}$ are constant and correspond to the internal dynamics of $s$, while  $A_{s's} \in \mathbb{R}^{n_s \times n_{s'}}$, $B_{s's} \in \mathbb{R}^{n_s \times m_{s'}}$ are constant matrices characterizing the influence of $s'$ on $s$. 
Given a particular ordering of the subsystems in $\mathcal{S}$ as $(s_1,s_2,\cdots,s_N)$, where $N=|\mathcal{S}|$, the states, controls and disturbances of $\mathcal{S}$ are denoted by $x,u,$ and $w$, respectively, where:
\begin{equation}
\label{eq_compacts}
x=\left(
\begin{array}{c}
x_{s_1} \\ \vdots \\ x_{s_N} 
\end{array}
\right),
u=\left(
\begin{array}{c}
u_{s_1} \\ \vdots \\ u_{s_N} 
\end{array}
\right),
w=\left(
\begin{array}{c}
w_{s_1} \\ \vdots \\ w_{s_N} 
\end{array}
\right),
\end{equation}  
We have $x\in \mathbb{R}^n$, $u\in \mathbb{R}^m$, $w\in \mathbb{R}^n$, where
\begin{equation}
n= \sum_{s \in \mathcal{S}} {n_s}, m= \sum_{s \in \mathcal{S}} {m_s}.
\end{equation} 
The evolution of $\mathcal{S}$ is written in the following compact form: 
\begin{equation}
\label{eq_system}
x[t+1]=Ax[t]+Bu[t]+w[t],
\end{equation}
where $A\in \mathbb{R}^{n \times n}, B\in \mathbb{R}^{n \times m}$ are unambiguously constructed from \eqref{eq_subsystem} and \eqref{eq_compacts}.

\subsection{Communication Graph}
\begin{define}
A directed communication graph is defined as the tuple $\mathcal{G}=(\mathcal{S},\mathcal{L})$, where $\mathcal{S}$ (the set of subsystems) is the set of vertices and $\mathcal{L} \subseteq \mathcal{S} \times \mathcal{S}$ is a set of ordered pairs. Subsystem $s'$ is able to transmit information to subsystem $s$ if and only if $(s',s) \in \mathcal{L}$. 
\end{define}

Given $\mathcal{G}=(\mathcal{S},\mathcal{L})$, we define the $k^{th}$ power of $\mathcal{G}$ as $\mathcal{G}^k=(\mathcal{S},\mathcal{L}^k)$, $k \in \mathbb{N}_+$, such that $(s,s') \in \mathcal{L}^k$ if and only if there exists a walk from $s$ to $s'$ on $\mathcal{G}$ with length less than or equal to $k$.
Note that $G^1=G$. As a special case for $k=0$, we define $\mathcal{G}^0$ and $\mathcal{L}^0$ such that $(s,s) \in \mathcal{L}^0, \forall s \in \mathcal{S}$ (self-loops). In other words, every subsystem has access to its own information. 

\begin{assumption}
\label{assume_access}
At time $t \in \mathbb{N}$, system $s$ knows $x_{s'}[t-k+1]$ and $u_{s'}[t-k]$ if and only if $(s',s) \in \mathcal{L}^k$.   
\end{assumption}

Assumption \ref{assume_access} requires that subsystems act as relay nodes while passing information in the network. Each relay node induces a one time step delay. Assumption \ref{assume_access} is not restrictive in most applications. We require each subsystem to have some additional memory to store the history of state and controls of its own and some other subsystems. Fortunately, as it will be made clear later, only a finite (usually small) number of recent states and controls is sufficient for our purpose. In this paper, every link beyond an immediate neighbor corresponds to one unit time delay. More complex delay behavior can be accommodated in our framework by adding virtual relaying nodes (see, e.g., \cite{lamperski2015optimal}).   

\subsection{Control Objective}
We are given the following polytopes:  
\begin{subequations}
\label{eq_sets}
\begin{equation}
\mathbb{X}:= \left\{x| H_x x \le h_x \right\},
\end{equation}
\begin{equation}
\mathbb{U}:= \left\{u| H_u u \le h_u \right\},
\end{equation}
\begin{equation}
\mathbb{W}:= \left\{w| H_u w \le h_w \right\},
\end{equation}
\end{subequations}
where $H_x \in \mathbb{R}^{q_x \times n}, H_u \in \mathbb{R}^{q_u \times m}, H_w \in \mathbb{R}^{q_w \times n}$, and $h_x \in \mathbb{R}_+^{q_x}, h_u \in \mathbb{R}_+^{q_u}, h_w \in \mathbb{R}_+^{q_w}$. Note that we assume $\mathbb{X}, \mathbb{U},$ and $\mathbb{W}$ contain the origin. 
 
\begin{define}[Centralized Policy] 
A centralized control policy is defined as $\mu^c:\mathbb{R}^n \rightarrow \mathbb{R}^m$, where 
\begin{equation}
u[t]=\mu^c(x[t]).
\end{equation} 
\end{define}

\begin{define}[Distributed Policy]
\label{define_control}
Given a networked control system $\mathcal{S}$ with communication graph $\mathcal{G}$, and a positive integer $K \ge 1$, a distributed control strategy of memory $K$ is defined as a set of functions $\mu^d:=\{\mu_s\}_{s \in \mathcal{S}}$ such that for all $t \ge K$:
\begin{equation}
\begin{array}{ll}
u_s[t]= \mu_s \Big( & \Big\{ \{ x_{s'}[t-k+1] \}_{(s',s) \in \mathcal{L}^k}, \\ 
 & \{ u_{s'}[t-k] \}_{(s',s) \in \mathcal{L}^k} \Big \}_{k \in \{1,\cdots,K\} } \Big),
\end{array}
\end{equation}
where $\mu_s: \mathbb{R}^{\eta_s} \rightarrow \mathbb{R}^{m_s}, \eta_s=\sum_{k=1}^{K} \sum_{(s',s) \in \mathcal{L}^{k-1}} n_{s'} + \sum_{k=1}^{K} \sum_{(s',s) \in \mathcal{L}^k} m_{s'}$.
\end{define}

Definition \ref{define_control} does not explain how to compute controls for $t < K$. We shift the start time to $K$ and make the following assumption. 

\begin{assumption}
\label{assume_initial}
System \eqref{eq_system} is initialized at time $t=K$ with $x[K]=0$, $x[k]=0,u[k]=0, \forall k \in [0,K-1]$.
\end{assumption}

Assuming the initial condition to be zero is restrictive but simplifies our analysis. We can drop Assumption \ref{assume_initial} at the expense of adding an initial coordination between the subsystems. The details are explained in \ref{sec_delay}. The second part of Assumption \ref{assume_initial} is not restrictive as we can always shift the start of time to $K$ and assign arbitrary values to the past.

\begin{define}[Correctness]
Given a networked control system $\mathcal{S}$ as \eqref{eq_subsystem}, \eqref{eq_system}, polytypic sets $\mathbb{X}$, $\mathbb{U}$, $\mathbb{W}$ as \eqref{eq_sets}, a communication graph $\mathcal{G}$, and a positive integer $K$, a (distributed) control $\mu$ (of memory $K$) is \emph{correct} if for all allowable sequences $w[K],w[K+1],\cdots$, $w[t] \in \mathbb{W}, \forall t\ge K$, we have $x[t] \in \mathbb{X}$ and $u[t] \in \mathbb{U}, \forall t \in \mathbb{N}$.
\end{define}

\begin{define}[Margin of Correctness]
Given a correct control policy $\mu$, the margin of correctness $\rho^* \in [0,1]$ is defined as the maximum value of $\rho$ for which $\mu$ remains correct when $\mathbb{X} \gets (1-\rho) \mathbb{X}$ and $\mathbb{U} \gets (1-\rho) \mathbb{U}$.       
\end{define}

The margin of correctness has a straightforward interpretation. If $\rho^*=0$, it implies that correctness is lost if $\mathbb{X}$ or $\mathbb{U}$ are shrunk around the origin. If $\rho^*=1$, it indicates that the state and controls can be always zero, which essentially requires $\mathbb{W}=\{0\}$. More complicated definitions for margin of correctness are also possible. For instance, one may consider different scaling variables for components in $\mathbb{X}$ and $\mathbb{U}$ and define the margin as a weighted sum of them.

\subsection{Problem Statement}
We formulate two problems. In both, we are given a networked control system $\mathcal{S}$ as \eqref{eq_subsystem}, \eqref{eq_system}, polytypic sets $\mathbb{X}$, $\mathbb{U}$, $\mathbb{W}$ as \eqref{eq_sets}, and a positive integer $K$. In practice, $K$ is a design parameter which determines the complexity of the controller. We usually start from small values of $K$ and make it larger until feasibility/satisfactory performance is reached.   

\begin{problem}[Optimal Strategy Design]
Given a communication graph $\mathcal{G}$, design a correct distributed control policy $\mu$ of memory $K$ with the maximum margin of correctness $\rho^*$.  
\end{problem}

\begin{problem}[Optimal Graph Design]
Find a communication graph $\mathcal{G}=(\mathcal{S},\mathcal{L})$ for which a correct control policy exists such that the following cost function is minimized:
 \begin{equation}
 J (\mathcal{G})= \sum_{s \in \mathcal{S}}\sum_{s' \in \mathcal{S}} c_{s's} \mathcal{I}\left((s',s) \in \mathcal{L} \right),
 \end{equation}
 where $c_{s's} \in \mathbb{R}_+^n$ is the cost of establishment of one-way communication link from $s'$ to $s$, and $\mathcal{I}$ is the indicator function that designates $1$ (respectively, $0$) if its argument is true (respectively, false). 
 
\end{problem}

\section{Parameterized Set-Invariance}
\label{sec_rci}
In this section, we present the family of parameterized controllers in \cite{rakovic2007optimized}. We do not, yet, impose structural constraints. The key idea of this paper is outlined in Sec. \ref{sec_delay}, where we show that a memoryless piecewise affine invariance-inducing control policy can be converted to a linear controller with memory, paving the path to impose structural requirements in Sec. \ref{sec_structured}.

\subsection{Convex Parameterization}
\begin{lemma}
\label{eq_lemma_sasha}
\cite{rakovic2007optimized}
Let $\Theta:=\left(\theta_0,\theta_1,\cdots,\theta_{K-1} \right)$, where $\theta_{k} \in \mathbb{R}^{m \times n}, k=0,\cdots,K-1$, be a $m \times {nK}$ matrix of parameters such that the following condition holds:
\begin{equation}
\label{eq_condition}
A^{K} + A^{K-1} B \theta_0 + \cdots + A B\theta_{K-2} + B \theta_{K-1}=0.
\end{equation}
Define the following set:
\begin{equation}
\label{eq_omega}
\begin{array}{rl}
\Omega_{\Theta}  := & (A^{K-1} + A^{K-2} B \theta_0 + \cdots + B\theta_{K-2}) \mathbb{W} \\
 & \oplus(A^{K-2} + A^{K-3} B \theta_0 + \cdots + B\theta_{K-3}) \mathbb{W} \\
 & \oplus \cdots \oplus (A+B\theta_{0}) \mathbb{W} \oplus \mathbb{W} \\
\end{array}
\end{equation}
Then there exists $\mu: \mathbb{R}^n \rightarrow \mathbb{R}^m$ such that 
$$\forall x \in \Omega_{\Theta}, \{ Ax+B \mu(x)\} \oplus \mathbb{W} \subseteq \Omega_{\Theta}.$$ 
\end{lemma}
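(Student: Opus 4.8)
The plan is to build an explicit invariance-inducing controller directly from the Minkowski-sum structure of $\Omega_\Theta$. The first step is to expose a recursion hidden in the coefficient matrices. Writing $M_j := A^{j} + A^{j-1}B\theta_0 + \cdots + B\theta_{j-1}$ for $j=0,\ldots,K-1$, with $M_0:=I$ the identity, the successive factors appearing in \eqref{eq_omega} are precisely $M_{K-1},M_{K-2},\ldots,M_1,M_0$, so that $\Omega_\Theta = M_{K-1}\mathbb{W}\oplus\cdots\oplus M_1\mathbb{W}\oplus M_0\mathbb{W}$. A direct comparison of terms shows that these matrices satisfy $M_{j+1}=AM_j+B\theta_j$, and condition \eqref{eq_condition} is exactly the statement that the next matrix in this recursion vanishes, namely $M_K=AM_{K-1}+B\theta_{K-1}=0$.

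Next I would use the definition of the Minkowski sum to write an arbitrary $x\in\Omega_\Theta$ as $x=\sum_{j=0}^{K-1}M_j w_j$ with each $w_j\in\mathbb{W}$, and define the controller by the matching rule $\mu(x):=\sum_{j=0}^{K-1}\theta_j w_j$. The verification is then a short telescoping computation; using the recursion,
\[
Ax+B\mu(x)=\sum_{j=0}^{K-1}(AM_j+B\theta_j)w_j=\sum_{j=0}^{K-1}M_{j+1}w_j.
\]
The $j=K-1$ term equals $M_K w_{K-1}=0$ by \eqref{eq_condition}, so only $M_1,\ldots,M_{K-1}$ survive. Adding an arbitrary disturbance $w\in\mathbb{W}$ and using $M_0=I$ gives
\[
Ax+B\mu(x)+w = M_0 w + M_1 w_0 + \cdots + M_{K-1}w_{K-2},
\]
which is again a sum of the form $\sum_{j=0}^{K-1}M_j w_j'$ with every $w_j'\in\mathbb{W}$, hence a point of $\Omega_\Theta$. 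As $w$ ranges over $\mathbb{W}$, this shows $\{Ax+B\mu(x)\}\oplus\mathbb{W}\subseteq\Omega_\Theta$, which is the claim.

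The one genuine subtlety, and the step I would treat most carefully, is that the decomposition $x=\sum_{j=0}^{K-1}M_j w_j$ is generally not unique, so $\mu$ becomes well defined only after committing to a selection $x\mapsto(w_0(x),\ldots,w_{K-1}(x))$ of disturbance generators. Since the Lemma asserts only existence of some $\mu$, any fixed selection suffices, and one is available constructively: for each $x$ pick a feasible point of the polytope $\{(w_0,\ldots,w_{K-1}) : \sum_{j=0}^{K-1} M_j w_j = x,\ w_j\in\mathbb{W}\}$, which is nonempty exactly because $x\in\Omega_\Theta$. Choosing this selection through a parametric linear program moreover renders $\mu$ piecewise affine in $x$, and I would stress this because the subsequent development in Sec.~\ref{sec_delay} reads off precisely these affine pieces to realize $\mu$ as a controller with memory; thus the selection is not a mere technicality but the very object that is later implemented.
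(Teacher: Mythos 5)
Your proof is correct and follows essentially the same route as the paper's: decompose $x\in\Omega_\Theta$ into disturbance generators, apply the matching control $\sum_j \theta_j w_j$, and use \eqref{eq_condition} to annihilate the oldest term so the sum shifts back into the form \eqref{eq_omega}. Your $M_j$-recursion is just a cleaner notation for the paper's explicit expansion (and in fact fixes a couple of typos in the paper's displayed intermediate steps), and your closing remark on the selection map making $\mu$ well defined is a point the paper defers to its discussion of the piecewise affine policy \eqref{eq_pwa}.
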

\begin{proof}
For all $x\in \Omega_{\Theta}$, there exists $w^{K-1},w^{K-2},\cdots,w^{0} \in \mathbb{W}$ such that
\begin{equation}
\label{eq_x}
\begin{array}{rl}
x= & (A^{K-1} + A^{K-2} B \theta_0 + \cdots + B\theta_{K-2}) w^{K-1}_x + \\
 & (A^{K-2} + A^{K-3} B \theta_0 + \cdots + B\theta_{K-3}) w^{K-2}_x \\
 & + \cdots + (A+B\theta_{0}) w_x^{1} + w_x^0.
\end{array}
\end{equation}
Now let the $\mu^c(x)$ be the following control input: 
\begin{equation}
\label{eq_control}
\mu^c(x)= \theta_{K-1} w_x^{K-1} + \theta_{K-2} w_x^{K-2} + \cdots + \theta_{0} w_x^{0}.   
\end{equation}
Denote the new disturbance hitting system by $w^{+}$. The subsequent state is 
\begin{equation}
\label{eq_evolution}
\begin{array}{rl}
x^+= &  Ax+B\mu^c(x)+w^+= 
\\ & (A^{K} + A^{K-1} B \theta_0 + \cdots + B\theta_{K-1}) w_x^{K-1} \\
 & +(A^{K-1} + A^{K-3} B \theta_0 + \cdots + B\theta_{K-2}) w_x^{K-2} \\
 & + \cdots + (A+B\theta_{0}) w_x^{0} + w^+.
\end{array}
\end{equation}
Substituting \eqref{eq_condition} in \eqref{eq_evolution} results in:
\begin{equation}
\label{eq_induction}
\begin{array}{rl}
x^+= & (A^{K-1} + A^{K-2} B \theta_0 + \cdots + B\theta_{K-2}) w_x^{K-2} \\
 & +(A^{K-2} + A^{K-3} B \theta_0 + \cdots + B\theta_{K-2}) w_x^{K-3} \\
 & + \cdots + (A+B\theta_{0}) w_x^{0} + w^+.
\end{array}
\end{equation} 
A quick inspection of \eqref{eq_omega} and \eqref{eq_induction} verifies $x^+ \in \Omega_{\Theta}$.
\end{proof}
Notice that \eqref{eq_condition} is not restrictive since $\Theta$ is non-empty for controllable $(A,B)$ and $K$ greater than its controllability index. The set $\Omega_\Theta$ is a \emph{robust control invariant} (RCI) set. Following \eqref{eq_control}, the set of all possible controls is
\begin{equation}
\Psi_\Theta := \bigoplus_{k=0}^{K-1} \theta_k \mathbb{W}.
\end{equation}
In order to have a correct control policy, we require $\Omega_\Theta \subseteq \mathbb{X}$ and $\Psi_\Theta \subseteq \mathbb{U}$. 
\begin{lemma}\cite{rakovic2007optimized}
The set of parameters $\Theta, \alpha \in \mathbb{R}_+$, for which $\Omega_\Theta \subseteq \alpha \mathbb{X}$ and $\Psi_\Theta \subseteq \alpha \mathbb{U}$ is convex.
\end{lemma}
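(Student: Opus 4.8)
The plan is to prove convexity directly from the definitions of $\Omega_\Theta$ and $\Psi_\Theta$, exploiting that every coefficient matrix appearing in them is an \emph{affine} function of the parameter tuple $\Theta$. Write $\Omega_\Theta = \bigoplus_{j=0}^{K-1} M_j(\Theta)\mathbb{W}$, where each $M_j(\Theta)$ is the corresponding block in \eqref{eq_omega}; since $\theta_0,\dots,\theta_{K-1}$ enter these blocks only linearly, added to fixed powers of $A$, each $M_j$ satisfies $M_j(\lambda\Theta_1+(1-\lambda)\Theta_2)=\lambda M_j(\Theta_1)+(1-\lambda)M_j(\Theta_2)$ for every $\lambda\in[0,1]$. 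The same observation applies to $\Psi_\Theta=\bigoplus_{k=0}^{K-1}\theta_k\mathbb{W}$, whose coefficient matrices are simply the coordinate blocks $\theta_k$ of $\Theta$.

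First I would establish a sub-distributive inclusion for a single transformed copy of $\mathbb{W}$: for any matrices $M_1,M_2$ and any $\lambda\in[0,1]$, a point $(\lambda M_1+(1-\lambda)M_2)w$ with $w\in\mathbb{W}$ equals $\lambda(M_1 w)+(1-\lambda)(M_2 w)$, hence lies in $\lambda M_1\mathbb{W}\oplus(1-\lambda)M_2\mathbb{W}$. This yields $(\lambda M_1+(1-\lambda)M_2)\mathbb{W}\subseteq \lambda M_1\mathbb{W}\oplus(1-\lambda)M_2\mathbb{W}$; note this is an inclusion, not an equality, but the inclusion points in exactly the direction we need. Applying it blockwise, and using associativity and commutativity of the Minkowski sum together with the affineness above, I would obtain $\Omega_{\lambda\Theta_1+(1-\lambda)\Theta_2}\subseteq \lambda\Omega_{\Theta_1}\oplus(1-\lambda)\Omega_{\Theta_2}$, and analogously $\Psi_{\lambda\Theta_1+(1-\lambda)\Theta_2}\subseteq \lambda\Psi_{\Theta_1}\oplus(1-\lambda)\Psi_{\Theta_2}$.

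Next, take two feasible points $(\Theta_1,\alpha_1)$ and $(\Theta_2,\alpha_2)$, i.e.\ $\Omega_{\Theta_i}\subseteq\alpha_i\mathbb{X}$ and $\Psi_{\Theta_i}\subseteq\alpha_i\mathbb{U}$. Combining with the previous inclusion and monotonicity of the Minkowski sum under containment gives $\Omega_{\lambda\Theta_1+(1-\lambda)\Theta_2}\subseteq \lambda\alpha_1\mathbb{X}\oplus(1-\lambda)\alpha_2\mathbb{X}$. The final ingredient is the identity $a\mathbb{X}\oplus b\mathbb{X}=(a+b)\mathbb{X}$, valid for all $a,b\ge 0$ because $\mathbb{X}$ is convex. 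Taking $a=\lambda\alpha_1$ and $b=(1-\lambda)\alpha_2$ collapses the right-hand side to $(\lambda\alpha_1+(1-\lambda)\alpha_2)\mathbb{X}$, so the convex-combined pair still satisfies the state containment; the identical argument with $\mathbb{U}$ handles the input containment. Thus $(\lambda\Theta_1+(1-\lambda)\Theta_2,\,\lambda\alpha_1+(1-\lambda)\alpha_2)$ is feasible and the set is convex. Since the defining relation \eqref{eq_condition} is a linear equality in $\Theta$, intersecting with it preserves convexity as well.

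The step I expect to require the most care is the collapse $a\mathbb{X}\oplus b\mathbb{X}=(a+b)\mathbb{X}$: its nontrivial inclusion $a\mathbb{X}\oplus b\mathbb{X}\subseteq(a+b)\mathbb{X}$ is precisely where convexity of $\mathbb{X}$ is used, by rewriting $ax_1+bx_2=(a+b)\big(\tfrac{a}{a+b}x_1+\tfrac{b}{a+b}x_2\big)$ as a scaled convex combination when $a+b>0$ and treating $a+b=0$ separately. It is also worth emphasizing that the blockwise step only yields an inclusion rather than equality, so one must confirm that monotonicity of containment under Minkowski sums is all that propagates downstream; this is fine here, since we only ever need upper bounds on $\Omega_\Theta$ and $\Psi_\Theta$.
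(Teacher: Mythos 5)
Your proof is correct, but it takes a genuinely different route from the paper. The paper proves the lemma algebraically: it invokes an extension of Farkas's lemma stating that a polytopic inclusion $\bigoplus_{i=1}^{\kappa} L_i \mathbb{S} \subseteq \alpha \mathbb{Y}$ is \emph{equivalent} to the existence of nonnegative matrices $Z_i$ satisfying $Z_i H_s = H_y L_i$ and $\sum_i Z_i h_s \le \alpha h_y$; since the coefficient matrices $L_i$ are affine in $\Theta$, the feasible set of $(\Theta,\alpha)$ is the projection of a polyhedron in the lifted variables $(\Theta,\alpha,Z_i)$, hence convex. Your argument instead works directly at the level of sets: affineness of the blocks $M_j(\Theta)$, the sub-distributive inclusion $(\lambda M_1+(1-\lambda)M_2)\mathbb{W}\subseteq \lambda M_1\mathbb{W}\oplus(1-\lambda)M_2\mathbb{W}$, monotonicity of Minkowski sums under containment, and the collapse $a\mathbb{X}\oplus b\mathbb{X}=(a+b)\mathbb{X}$ for convex $\mathbb{X}$ --- and you correctly identify that last identity as the place where convexity of $\mathbb{X}$ enters, and that only inclusions (never equalities) are needed throughout. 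The trade-off is worth noting: your proof is more elementary and strictly more general, since it never uses the halfspace representations and therefore applies to arbitrary convex $\mathbb{X}$, $\mathbb{U}$, $\mathbb{W}$; the paper's proof, by contrast, proves more than convexity --- it produces an explicit finite system of linear constraints, which is exactly what the paper needs to cast the containments $\Omega_\Theta \subseteq (1-\rho)\mathbb{X}$ and $\Psi_\Theta \subseteq (1-\rho)\mathbb{U}$ as the linear program \eqref{eq_solution1} and the MILP \eqref{eq_solution2}. So your argument establishes the stated lemma cleanly, but it would not, on its own, supply the computational encoding the rest of the paper is built on.
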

\begin{proof} The proof is based on an extension of Farkas's Lemma: Given sets $\mathbb{S}=\{ s \in \mathbb{R}^n| H_{s} s \le h_{s} \},$ and $Y \subset \{ y \in \mathbb{R}^q| H_{y} y \le h_{y} \}, $ and matrices $L_i \in \mathbb{R}^{q\times n}$, $i=1,\cdots,\kappa$, then $
\bigoplus_{i=1}^{\kappa} L_i \mathbb{S} \subseteq \alpha \mathbb{Y},
$ is equivalent to the following set of constraints: 
\begin{equation}
Z_i H_{s} = H_y L_i, i=1,\cdots,\kappa, \sum_{i=1}^{\kappa} Z_i h_s \le \alpha h_y, 
\end{equation}
where $Z_i \in \mathbb{R}^{q\times n}$, $i=1,\cdots,\kappa$, are appropriately sized matrices with non-negative entries. Convexity follows from the linearity of the constraints. 
\end{proof}

\begin{remark}
The family of RCI sets introduced in \cite{rakovic2007optimized} is not necessarily equivalent to the set of all RCI sets. In particular, there is no guarantee that one can find the maximal RCI set using this approach, which is not surprising as the problem of finding the maximal RCI set is not decidable, in general \cite{Blanchini:1999aa}. Nevertheless, the set of RCI sets in \cite{rakovic2007optimized} is quite rich as they are generated using piecewise affine feedback laws rather than the much more limited class of linear control laws.
\end{remark}

\subsection{Centralized Policy}

Given $\Theta$, the map from $x$ to $u$ requires finding values for $w_x^{K-1},\cdots,w_x^0$, subject to \eqref{eq_x}. This is accomplished by solving a linear/quadratic program. A centralized policy can be constructed as: 
\begin{equation}
\label{eq_pwa}
\begin{array}{ccl}
u[t]=\mu^c(x[t])=&  \displaystyle \underset{u}{\argmin} & \pi(u) \\
& \st & \eqref{eq_x},\eqref{eq_control}, w^k_x \in \mathbb{W},\\
& & k=0,\cdots,K-1,
\end{array}
\end{equation}
where $\pi: \mathbb{R}^m \rightarrow \mathbb{R}$ is a user-defined convex linear/quadratic cost function. It is well-known that $\mu^c$ becomes a piecewise affine function. 
\begin{remark}
\label{remark_admm}
An alternative way to make a set-invariance control policy distributed is using the state-of-the-art distributed convex optimization techniques to solve \eqref{eq_pwa}, such as alternating direction method of multipliers (ADMM) \cite{boyd2011distributed}. However, a substantial communication and computation effort is required to perform the iterations in ADMM \cite{summers2012distributed}. 
\end{remark}

\subsection{Linear Delay Policy}
\label{sec_delay}
The following result states that any memoryless piecewise affine policy obtained from \eqref{eq_pwa} can be converted into a linear policy with memory $K$.   
\begin{theorem}
Let $\Theta$ such that $\Omega_\Theta \subseteq \mathbb{X}$ and $\Psi_\Theta \subseteq \mathbb{U}$. Then a control policy in which control decisions for $t \ge K$ are given as:
\begin{equation}
\label{eq_delay_policy}
\begin{array}{ll}
u[t] = & \theta_{K-1} w[t-K]+ \theta_{K-2} w[t-K+1] \\
  & + \cdots + \theta_{0} w[t-1] \\
 \end{array}
\end{equation}
is correct if the following condition holds:
\begin{equation}
\label{eq_xk}
\begin{array}{ll}
x[K]= & (A^{K-1} + A^{K-2} B \theta_0 + \cdots + B\theta_{K-2}) w[0] + \\
 & (A^{K-2} + A^{K-3} B \theta_0 + \cdots + B\theta_{K-3})w[1] \\
 & + \cdots + (A+B\theta_{0}) w[K-2] + w[K-1].
\end{array}
\end{equation}

\end{theorem}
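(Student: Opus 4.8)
The plan is to reduce the linear delay policy to the memoryless invariance-inducing control constructed inside the proof of Lemma~\ref{eq_lemma_sasha} and then propagate set-invariance by induction on $t$. The first observation is that the hypothesis \eqref{eq_xk} is nothing other than the statement $x[K]\in\Omega_\Theta$, written out through the explicit decomposition \eqref{eq_x} with the identification $w_x^{K-1}=w[0]$, $w_x^{K-2}=w[1]$, \dots, $w_x^{0}=w[K-1]$ (i.e. $w_x^{i}=w[K-1-i]$). With this identification, comparing \eqref{eq_delay_policy} at $t=K$ with the control \eqref{eq_control} shows that the delay control $u[K]=\theta_{K-1}w[0]+\cdots+\theta_{0}w[K-1]$ is exactly \eqref{eq_control} evaluated on that decomposition, hence it is the invariance-inducing control furnished by Lemma~\ref{eq_lemma_sasha}.

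Next I would set up the induction. The invariant to maintain is: for every $t\ge K$, the state $x[t]$ admits the decomposition \eqref{eq_x} with $w_x^{i}=w[t-1-i]$ for $i=0,\dots,K-1$ (equivalently, $x[t]$ equals the right-hand side of \eqref{eq_xk} with $w[0],\dots,w[K-1]$ replaced by $w[t-K],\dots,w[t-1]$), and consequently $u[t]$ given by \eqref{eq_delay_policy} coincides with \eqref{eq_control} on this decomposition. The base case $t=K$ is the previous paragraph. For the inductive step I reuse verbatim the computation in the proof of Lemma~\ref{eq_lemma_sasha}: substituting $u[t]$ and the new disturbance $w^{+}=w[t]$ into $x[t+1]=Ax[t]+Bu[t]+w[t]$ and invoking the vanishing condition \eqref{eq_condition} collapses \eqref{eq_evolution} to \eqref{eq_induction}. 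Reading \eqref{eq_induction} off, the resulting $x[t+1]$ is again of the form \eqref{eq_x}, but with the disturbance blocks shifted by one index, namely $w_x^{K-1}=w[t-K+1],\dots,w_x^{1}=w[t-1],\,w_x^{0}=w[t]$. This is precisely the invariant at time $t+1$, and comparing with \eqref{eq_control} once more gives the matching form of $u[t+1]$, which closes the induction.

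Finally I would collect the correctness conclusion. The invariant forces $x[t]\in\Omega_\Theta$ for all $t\ge K$ and, through \eqref{eq_control}, $u[t]\in\Psi_\Theta$ for all $t\ge K$; the standing hypotheses $\Omega_\Theta\subseteq\mathbb{X}$ and $\Psi_\Theta\subseteq\mathbb{U}$ then give $x[t]\in\mathbb{X}$ and $u[t]\in\mathbb{U}$. For $t<K$, Assumption~\ref{assume_initial} yields $x[t]=u[t]=0$, which lie in $\mathbb{X}$ and $\mathbb{U}$ because these sets contain the origin. Together these cover all $t\in\mathbb{N}$, establishing correctness.

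I expect the only real subtlety, rather than a genuine obstacle, to be the bookkeeping that the ``memory'' entries $w[t-K],\dots,w[t-1]$ appearing in \eqref{eq_delay_policy} are legitimate elements of $\mathbb{W}$ and constitute a \emph{valid} decomposition of $x[t]$ at every step. The realized disturbances satisfy $w[\tau]\in\mathbb{W}$ by hypothesis, and the one-step shift exposed by \eqref{eq_induction} guarantees that the most recent $K$ disturbances always occupy the roles of $w_x^{K-1},\dots,w_x^{0}$ in \eqref{eq_x}. This is exactly what allows the \emph{linear} delay policy to reproduce the generically \emph{piecewise-affine} law \eqref{eq_pwa} without solving any optimization: the true disturbance history already supplies the decomposition variables, so no recovery of the $w_x^{k}$ is needed.
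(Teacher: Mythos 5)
Your proposal is correct and follows essentially the same route as the paper: an induction on $t$ maintaining the explicit disturbance decomposition of $x[t]$ (base case \eqref{eq_xk}, inductive step collapsing via \eqref{eq_condition}), then concluding $x[t]\in\Omega_\Theta$ and $u[t]\in\Psi_\Theta$ and invoking $\Omega_\Theta\subseteq\mathbb{X}$, $\Psi_\Theta\subseteq\mathbb{U}$. Your explicit treatment of $t<K$ via Assumption~\ref{assume_initial} and the origin-containment of $\mathbb{X},\mathbb{U}$ is a small point of added care that the paper's proof leaves implicit, but it does not constitute a different argument.
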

\begin{proof}
We prove correctness by showing that 
\begin{equation}
\begin{array}{l}
x[t]=  (A^{K-1} + A^{K-2} B \theta_0 + \cdots + B\theta_{K-2}) w[t-K] \\
+ (A^{K-2} + A^{K-3} B \theta_0 + \cdots + B\theta_{K-3})w[t-K+1] \\
 + \cdots + (A+B\theta_{0}) w[t-2] + w[t-1].
\end{array}
\end{equation}
for all $t \ge K$. We prove by induction. For $t=K$, the statement is assumed true as \eqref{eq_xk}. We prove the inductive step using \eqref{eq_condition} to arrive in:
\begin{equation*}
\begin{array}{l}
x[t+1] = Ax[t]+Bu[t]+w[t]  \\
= (A^{K-1} + A^{K-2} B \theta_0 + \cdots + B\theta_{K-2}) w[t-K+1] \\
+ (A^{K-2} + A^{K-3} B \theta_0 + \cdots + B\theta_{K-3})w[t-K+2] \\
 + \cdots + (A+B\theta_{0}) w[t-1] + w[t].
\end{array}
\end{equation*}
It follows from \eqref{eq_omega} and \eqref{eq_control} that $x[t] \in \Omega_\Theta, u[t] \in \Psi_\Theta, \forall t \ge K$, and the proof is complete.  
\end{proof}
Eq. \eqref{eq_delay_policy} is a linear policy based on disturbances. However, disturbances are not assumed to be directly measurable. Using \eqref{eq_system}, we can replace disturbances by state and controls to obtain a more useful form of \eqref{eq_delay_policy}:
\begin{equation}
\label{eq_linear}
\begin{array}{ll}
u[t] = & (-\theta_{K-1}A) x[t-K] \\
& + (\theta_{K-1}-\theta_{K-2}A) x[t-K+1] \\
& + (\theta_{K-2}-\theta_{K-3}A) x[t-K+2] \\
& + \cdots + (\theta_{1}-\theta_{0}A) x[t-1] + \theta_0 x[t] \\
& - \big( \theta_{K-1} B u[t-K] + \theta_{K-2} B u[t-K+1] \\
& + \cdots + \theta_{0} B u[t-1]\big). \\
\end{array}
\end{equation}
Since all zeros is a trivial solution to \eqref{eq_xk}, a simple way to make \eqref{eq_xk} true is holding Assumption \ref{assume_initial}. For any initial condition $x[K] \in \Omega_\Theta$, we can find hypothetical values for $w[0],w[1],\cdots,w[K-1]$ such that \eqref{eq_xk} holds by solving a linear program. However, solving such a linear program may require a central entity. We may use distributed linear program solvers (see Remark \ref{remark_admm}) to accomplish this task. Therefore, Assumption \ref{assume_initial} is relaxable given arbitrary initial conditions, as long as they lie in the RCI set. Note that if the initial condition is outside of the (maximal) RCI set, satisfying the set-invariance objective is impossible. 
\section{Control with Structural Constraints}
\label{sec_structured}
Here we provide the solution to Problem 1.
We impose structural requirements on \eqref{eq_linear} based on Assumption \ref{assume_access}. We define the following sets of matrices:
\begin{subequations}
\label{eq_ssets}
\begin{equation}
\begin{array}{ll}
\mathbb{S}^x\left((\mathcal{S},\mathcal{L})\right):=& \Big \{ G \in \mathbb{R}^{m\times n} \big|
 u_{[i]} \in s, x_{[j]} \in s',
 \\ &  (s',s) \not \in \mathcal{L} \Rightarrow G_{[i,j]}=0 \Big\},
\end{array}
\end{equation}
\begin{equation}
\begin{array}{ll}
\mathbb{S}^u\left((\mathcal{S},\mathcal{L})\right):=& \Big \{ G \in \mathbb{R}^{m\times n} \big|
 u_{[i]} \in s, u_{[j]} \in s',
 \\ &  (s',s) \not \in \mathcal{L} \Rightarrow G_{[i,j]}=0 \Big\},
\end{array}
\end{equation}
\end{subequations}
where $x_{[i]} \in s$ ($u_{[i]} \in s$) is interpreted as whether $i$'th component of $x$ ($u$) belongs to subsystem $s$. Sets in \eqref{eq_ssets} are convex. The coefficients that relate a component of $u[t]$ in \eqref{eq_linear} to a component of $x[t-k+1],u[t-K], k=1,\cdots,K$, has to be zero if it violates Assumption \ref{assume_access}, which is formally stated as follows:
\begin{equation}
\label{eq_structural}
\left\{
\begin{array}{l}
 \theta_{K-1}A \in \mathbb{S}^x(\mathcal{G}^{K+1}) \\
 \theta_{K-1}-\theta_{K-2}A \in \mathbb{S}^x(\mathcal{G}^{K}) \\
 \vdots \\
 \theta_{1}-\theta_{0}A \in \mathbb{S}^x(\mathcal{G}^{2}) \\
 \theta_{0} \in \mathbb{S}^x(\mathcal{G}^{1})
\end{array}
\right.
,\left\{
\begin{array}{l}
 \theta_{K-1}B \in \mathbb{S}^u(\mathcal{G}^{K}) \\
\vdots \\
 \theta_{0}B \in \mathbb{S}^u(\mathcal{G}^{1})
\end{array}
\right.
\end{equation}
Finally, the solution to Problem 1 is found by solving the following linear program:
\begin{equation}
\label{eq_solution1}
\begin{array}{cll}
\{\rho^*,\Theta^*\}=&  \displaystyle \underset{\Theta,\rho}{\argmax} & \rho \\
& \text{subject to} & \eqref{eq_condition},\eqref{eq_structural},\\
&& \Omega_\Theta \subseteq (1-\rho) \mathbb{X}, \\
& & \Psi_\Theta \subseteq (1-\rho) \mathbb{U}.
\end{array}
\end{equation}
\subsection*{Complexity}
The number of variables and constraints in \eqref{eq_solution1} scales linearly with respect to $K$, $n$, $m$, and the number of rows in $\mathbb{X}, \mathbb{U}, \mathbb{W}$. In practice, representation complexity of sets in \eqref{eq_sets} scale polynomially in $n$ and $m$, while the exact degree of growth depends on the application. Thus, taking the complexity of the interior-point linear programming methods into account, the overall complexity of our solution to Problem 1 increase polynomially with respect to the problem size.

\section{Structure Design}
\label{sec_design}
Here we provide the solution to Problem 2.
\subsection{Binary encoding}
We need to make binary decisions on whether $s'$ is connected to $s$ on $\mathcal{G}$. This task is captured by introducing binary $N(N-1)$ binary variables $b_{s's}\in \{0,1\}, s,s' \in \mathcal{S}$. Note that $b_{ss}=1, \forall s \in \mathcal{S}$. We define the adjacency matrix of $\mathcal{G}$ as $\mathcal{B}(\mathcal{G}) \in \mathbb{B}^{N \times N}$ such that $\mathcal{B}(\mathcal{G})_{[s's]}:=b_{s's}$. The following property follows from the basic properties of powers of adjacency matrix in graph theory:
\begin{equation}
\label{eq_powers}
\mathcal{B}(\mathcal{G}^k)=\sum_{p=1}^{k} \mathcal{B}(\mathcal{G}^p),
\end{equation}
where both summation and multiplication are defined in a Boolean sense, i.e., for $b_1,b_2 \in \mathbb{B}$ we have $b_1b_2=b_1\wedge b_2$ and $b_1+b_2=b_1 \vee b_2$. (e.g., $1+1=1$). 

Given $\mathcal{B} \in \mathbb{B}^{N\times N}$, we define $\mathcal{B}^x \in \mathbb{B}^{m \times n}$ and $\mathcal{B}^u \in \mathbb{B}^{m \times m}$ such that
\begin{subequations}
\begin{equation}
\mathcal{B}^x((\mathcal{S},\mathcal{L}))_{[i,j]}=b_{s's}, u_{[i]} \in s_{i}, x_{[j]} \in s_{[j]},
\end{equation}
\begin{equation}
\mathcal{B}^u((\mathcal{S},\mathcal{L}))_{[i,j]}=b_{s's}, u_{[i]} \in s_{i}, u_{[j]} \in s_{[j]}.
\end{equation}
\end{subequations}
Given a matrix $C\in \mathbb{R}^{m \times n}$, and $k \in \mathbb{N}_+$ the following relation holds:
\begin{equation}
\label{eq_binary_sets}
C \in \mathbb{S}^{z}(\mathcal{G}^k) \Leftrightarrow -M \mathcal{B}^z(\mathcal{G})\le C \le -M \mathcal{B}^z(\mathcal{G}), 
\end{equation} 
where $z=x,u$, and $M$ is a sufficiently large positive number that is greater than $\max_{i,j} |C_{[i,j]]}|$. The constraints in \eqref{eq_powers}, \eqref{eq_binary_sets}, are mixed binary-linear constraints. We need only to declare the entries in $\mathcal{B}(\mathcal{G})$ as binaries - there are $N(N-1)$ of them - and all the other relations in \eqref{eq_powers} are captured using continuous auxiliary variables declared over $[0,1]$ - which constraints enforce them take values from $\{0,1\}$. Encoding Boolean functions using mixed binary-linear constraints is a standard procedure (see, e.g., \cite{Bemporad1999}) and the details are not presented here.   

\subsection{Graph Optimization}
\label{sec_gop}
Finally, we find the optimal communication graph $\mathcal{G}^*$ - the solution to Problem 2 - as the following mixed-integer linear program (MILP): 

\begin{equation}
\label{eq_solution2}
\begin{array}{cll}
\left\{\Theta^*, \mathcal{B}(\mathcal{G}^*) \right\}=&  \displaystyle \underset{\Theta, b_{s's},s,s' \in \mathcal{S}}{\argmin} & \sum_{s,s' \in \mathcal{S}} b_{s's}  \\
& \text{subject to} & \eqref{eq_condition},\eqref{eq_structural},\eqref{eq_binary_sets},\\
& & \Omega_\Theta \subseteq \mathbb{X},\Psi_\Theta \subseteq \mathbb{U}.
\end{array}
\end{equation}
Note that \eqref{eq_solution2} provides both a communication graph and a corresponding RCI set and distributed control policy parameterized by $\Theta^*$. Note that we can combine Problem 1 and Problem 2 by adding an additional term to the cost function in \eqref{eq_solution2} to promote greater margin of correctness. The trade-off between sparser graph and greater margin of correctness can be controlled by designating weights to the corresponding terms.  

\subsection*{Complexity}

Unlike \eqref{eq_solution1}, solving \eqref{eq_solution2} is NP-hard. MILP solvers use branch and bound techniques to explore optimal solutions by solving linear-program relaxations of the original problem. In order to find suboptimal but (arbitrary) faster solutions, a simple approach is terminating the MILP solver early -  after it has an incumbent feasible solution. (see Fig. \ref{fig_solver} in the examples). 


\section{Examples}
\label{sec_examples}

We have developed a python script that solves Problem 1 and Problem 2 given the system, specification and relevant parameters by the user. This script, as well as the codes for the example below, are publicly available in \texttt{github.com/sadraddini/distinct}.
\subsection{Coupled Double Integrators}

We consider $N=5$ double-integrators with state and control couplings. Fo all $s,s' \in \mathcal{S}$, we assign the following values to \eqref{eq_subsystem}:
\begin{equation*}
A_s=\left(
\begin{array}{cc}
1+\epsilon & 1 \\
-\epsilon & 1+\epsilon 
\end{array}
\right),
A_{s's}=\left(
\begin{array}{cc}
\epsilon & -\epsilon \\
-\epsilon & \epsilon 
\end{array}
\right),
\end{equation*}
\begin{equation*}
B_s=\left(
\begin{array}{c}
0 \\
1
\end{array}
\right),
B_{s's}=\left(
\begin{array}{c}
-\epsilon \\
\epsilon  
\end{array}
\right),
\end{equation*}
where $\epsilon$ is a constant characterizing the degree of coupling. We explore the behavior of solutions versus multiple values of $\epsilon>0$. For any $\epsilon>0$, at least one of the eigenvalues of $A$ lies out of the unit circle. Thus $A$ is unstable. We let $\mathbb{X}= \mathcal{B}^{10}_\infty, \mathbb{U}= 2\mathcal{B}^{5}_\infty, \mathbb{W}= \eta\mathcal{B}^{10}_\infty$, where $\eta$ is also a constant we vary in this example.    

\subsubsection{Structured Control} We solve Problem 1. We are given a communication graph that is  circular, as illustrated in Fig.  \ref{fig_given_graph}. We consider both the directed and the undirected case. The results for various values of $K$, $\eta$, $\epsilon$, are shown in Table \ref{table_margin}. As expected, the margins are smaller when coupling and disturbances are greater, and communications are directed. Also, higher values of $K$ usually correspond to better performance. For $K<6$, we could not find a solution for the directed graph. Projections of the RCI set and sample trajectories are illustrated in Fig. \ref{fig_rci}. It is observed that the undirected circular communication graph is able to keep the state closer to zero, while the RCI set and trajectories of the directed graph get closer to the boundaries of $\mathbb{X}$. All the computations in Table \ref{table_margin} were performed using Gurobi linear program solver on a dual core 3GHz MacBook Pro. The computation times were all less than a second.

\begin{figure}
\vspace{10pt}
\centering
~~~
\begin{tikzpicture}[xscale=1.3,yscale=1.3]
\tikzset{vertex/.style = {shape=circle,draw,minimum size=1pt, inner sep=0pt}}
\begin{scope}[edge/.style = {->,> = latex',line width=0.7pt},node distance=2pt]
\node[vertex] (1) at  (0,1.3) {$s_1$};
\node[vertex] (2) at  (-0.7,0.8) {$s_2$};
\node[vertex] (3) at  (-0.5,0) {$s_3$};
\node[vertex] (4) at  (0.5,0) {$s_4$};
\node[vertex] (5) at  (0.7,0.8) {$s_5$};
\draw[edge] (1) to (2);
\draw[edge] (2) to (3);
\draw[edge] (3) to (4);
\draw[edge] (4) to (5);
\draw[edge] (5) to (1);
\end{scope}
\end{tikzpicture}
~~~~
\begin{tikzpicture}[xscale=1.3,yscale=1.3]
\tikzset{vertex/.style = {shape=circle,draw,minimum size=1pt, inner sep=0pt}}
\begin{scope}[edge/.style = {<->,> = latex',line width=0.7pt},node distance=2pt]
\node[vertex] (1) at  (0,1.3) {$s_1$};
\node[vertex] (2) at  (-0.7,0.8) {$s_2$};
\node[vertex] (3) at  (-0.5,0) {$s_3$};
\node[vertex] (4) at  (0.5,0) {$s_4$};
\node[vertex] (5) at  (0.7,0.8) {$s_5$};
\draw[edge] (1) to (2);
\draw[edge] (2) to (3);
\draw[edge] (3) to (4);
\draw[edge] (4) to (5);
\draw[edge] (5) to (1);
\end{scope}
\end{tikzpicture}
\caption{Circular graphs: [Left]: directed [Right]: undirected}
\label{fig_given_graph}
\end{figure}
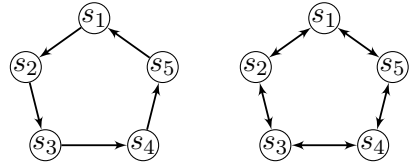

\begin{table}[t]
\vspace{0.07in}
\centering
\caption{Margin of Correctness for Graphs in Fig. \ref{fig_given_graph} }
\resizebox{0.34\textwidth}{!}{%
\begin{tabular}{|c||c|c|c|c|}
\hline
\multirow{2}{*}{$K$} & \multirow{2}{*}{$\eta$} & \multirow{2}{*}{$\epsilon$} & \multicolumn{2}{|c|}{$\rho^*$} \\ \cline{4-5}
 &  &  & Directed & Undirected \\

\hline 
 $6$ & $0.05$ & $0.05$ & $0.27$ & $0.75$ \\
 $6$ & $0.1$ & $0.1$ & Infeasible & $0.33$ \\
 $6$ & $0.1$ & $0.01$ & $0.02$ & $0.58$ \\
 $4$ & $0.05$ & $0.01$ & Infeasible & $0.79$ \\
 $4$ & $0.05$ & $0.05$ & Infeasible & $0.75$ \\
 $6$ & $0.05$ & $0.01$ & $0.51$ & $0.79$ \\
\hline
\end{tabular}
}
\label{table_margin}
\end{table}

\begin{figure}[t]
\centering
\includegraphics[width=0.21\textwidth]{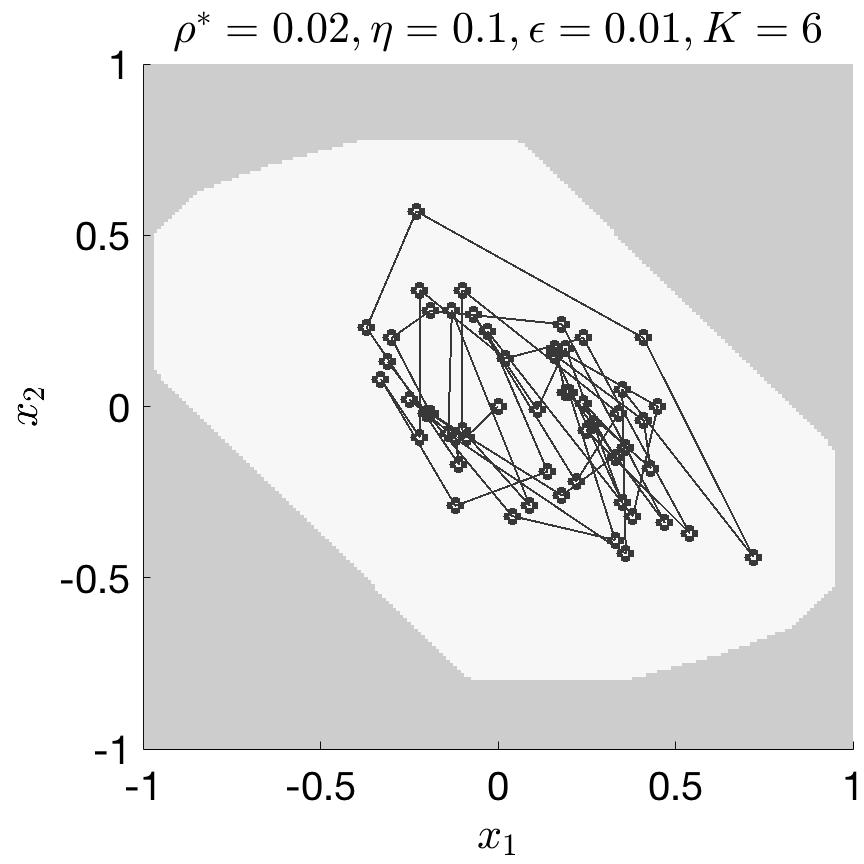}
\includegraphics[width=0.21\textwidth]{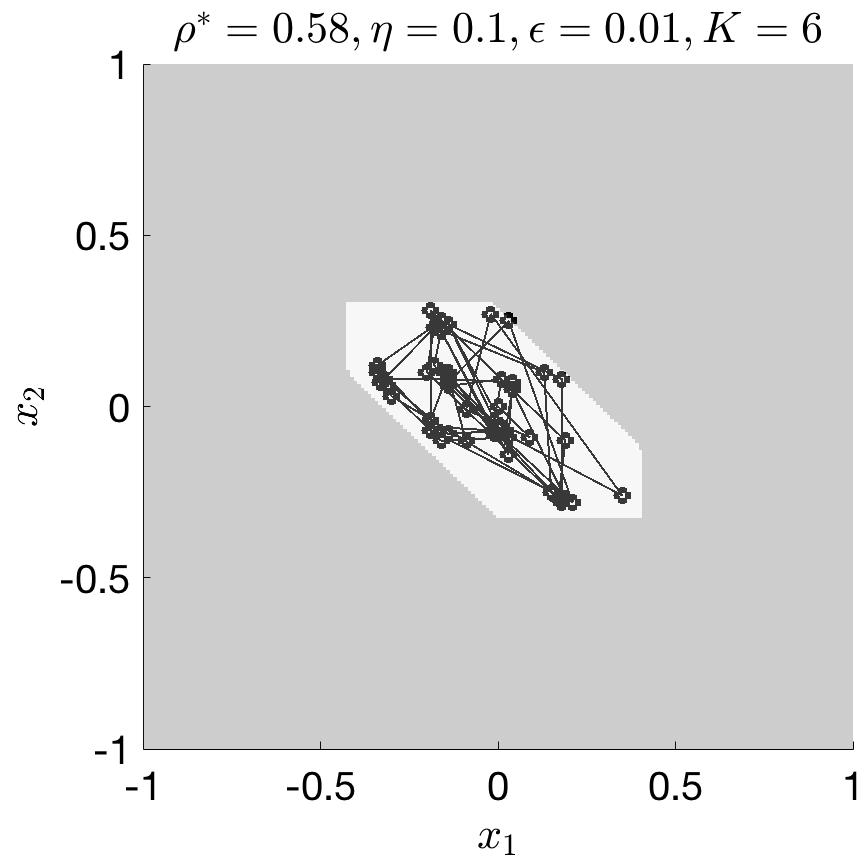}
\caption{Projection of the RCI set on the state-space of $s_1$ and a sample trajectory of 60 time steps. RCI sets correspond to [Left]: directed [Right]: undirected graphs in Fig. \ref{fig_given_graph}.}
\label{fig_rci}
\end{figure}

\subsubsection{Graph Design}
We solve Problem 2. We let $c_{s's}=1,\forall s,s' \in \mathcal{S}, s \neq s'$. Various optimal graphs corresponding to different values are shown in Fig. \ref{fig_various}. We often obtained graphs that were strongly connected. However, in case the couplings are sufficiently weak, fully decentralized solutions were found, as shown in the null graph in Fig. \ref{fig_various} (f). The computations were performed using Gurobi MILP solver on a 3GHz dual core MacBook Pro. As discussed in Sec. \ref{sec_gop}, MILP solvers explore solutions using branch and bound techniques. Two instances of the best incumbent solution versus time is shown in Fig. \ref{fig_solver}. We can obtain suboptimal solutions by early manual termination.  

\begin{figure}[t]
\centering
\vspace{10pt}
\begin{tikzpicture}[xscale=1.3,yscale=1.3]
\tikzset{vertex/.style = {shape=circle,draw,minimum size=1pt, inner sep=0pt}}
\begin{scope}[edge/.style = {->,> = latex',line width=0.7pt},node distance=2pt]
\node[vertex] (1) at  (0,1.3) {$s_1$};
\node[vertex] (2) at  (-0.7,0.8) {$s_2$};
\node[vertex] (3) at  (-0.5,0) {$s_3$};
\node[vertex] (4) at  (0.5,0) {$s_4$};
\node[vertex] (5) at  (0.7,0.8) {$s_5$};
\draw[edge] (1) to (2);
\draw[edge] (1) to (3);
\draw[edge] (1) to (4);
\draw[edge] (1) to (5);

\draw[edge] (2) to (1);
\draw[edge] (2) to (4);
\draw[edge] (2) to (5);

\draw[edge] (3) to (1);
\draw[edge] (3) to (4);

\draw[edge] (4) to (1);
\draw[edge] (4) to (2);
\draw[edge] (4) to (3);
\draw[edge] (4) to (5);

\draw[edge] (5) to (1);
\draw[edge] (5) to (4);

\node at (0,-0.2) {\tiny (a) $K=3$};
\node at (0,-0.45) {\tiny $\eta=0.2,\epsilon=0.06$};
\node at (0,-0.7) {\tiny $J(\mathcal{G}^*)=14$, CT=82s};
\end{scope}
\end{tikzpicture}
\begin{tikzpicture}[xscale=1.3,yscale=1.3]
\tikzset{vertex/.style = {shape=circle,draw,minimum size=1pt, inner sep=0pt}}
\begin{scope}[edge/.style = {->,> = latex',line width=0.7pt},node distance=2pt]
\node[vertex] (1) at  (0,1.3) {$s_1$};
\node[vertex] (2) at  (-0.7,0.8) {$s_2$};
\node[vertex] (3) at  (-0.5,0) {$s_3$};
\node[vertex] (4) at  (0.5,0) {$s_4$};
\node[vertex] (5) at  (0.7,0.8) {$s_5$};
\draw[edge] (1) to (2);
\draw[edge] (2) to (1);
\draw[edge] (2) to (3);
\draw[edge] (2) to (4);
\draw[edge] (3) to (2);
\draw[edge] (3) to (4);
\draw[edge] (4) to (5);
\draw[edge] (4) to (3);
\draw[edge] (5) to (3);
\node at (0,-0.2) {\tiny (b) $K=6$};
\node at (0,-0.45) {\tiny $\eta=0.2,\epsilon=0.1$};
\node at (0,-0.7) {\tiny $J(\mathcal{G}^*)=9$, CT=321s};
\end{scope}
\end{tikzpicture}
\begin{tikzpicture}[xscale=1.3,yscale=1.3]
\tikzset{vertex/.style = {shape=circle,draw,minimum size=1pt, inner sep=0pt}}
\begin{scope}[edge/.style = {->,> = latex',line width=0.7pt},node distance=2pt]
\node[vertex] (1) at  (0,1.3) {$s_1$};
\node[vertex] (2) at  (-0.7,0.8) {$s_2$};
\node[vertex] (3) at  (-0.5,0) {$s_3$};
\node[vertex] (4) at  (0.5,0) {$s_4$};
\node[vertex] (5) at  (0.7,0.8) {$s_5$};
\draw[edge] (1) to (2);
\draw[edge] (2) to (1);
\draw[edge] (2) to (3);
\draw[edge] (2) to (4);
\draw[edge] (2) to (5);
\draw[edge] (3) to (2);
\draw[edge] (4) to (2);
\draw[edge] (5) to (2);
\node at (0,-0.2) {\tiny (c) $K=4$};
\node at (0,-0.45) {\tiny $\eta=0.2,\epsilon=0.05$};
\node at (0,-0.7) {\tiny $J(\mathcal{G}^*)=8$, CT=213s};
\end{scope}
\end{tikzpicture}
\\
\begin{tikzpicture}[xscale=1.4,yscale=1.4]
\tikzset{vertex/.style = {shape=circle,draw,minimum size=1pt, inner sep=0pt}}
\begin{scope}[edge/.style = {->,> = latex',line width=0.7pt},node distance=2pt]
\node[vertex] (1) at  (0,1.3) {$s_1$};
\node[vertex] (2) at  (-0.7,0.8) {$s_2$};
\node[vertex] (3) at  (-0.5,0) {$s_3$};
\node[vertex] (4) at  (0.5,0) {$s_4$};
\node[vertex] (5) at  (0.7,0.8) {$s_5$};
\draw[edge] (1) to (3);
\draw[edge] (1) to (4);
\draw[edge] (2) to (1);
\draw[edge] (3) to (5);
\draw[edge] (4) to (2);
\draw[edge] (4) to (3);
\draw[edge] (5) to (1);
\node at (0,-0.2) {\tiny (d) $K=4$};
\node at (0,-0.45) {\tiny $\eta=0.1,\epsilon=0.1$};
\node at (0,-0.7) {\tiny$J(\mathcal{G}^*)=7$, CT=721s};
\end{scope}
\end{tikzpicture}
\begin{tikzpicture}[xscale=1.4,yscale=1.5]
\tikzset{vertex/.style = {shape=circle,draw,minimum size=1pt, inner sep=0pt}}
\begin{scope}[edge/.style = {->,> = latex',line width=0.7pt},node distance=2pt]
\node[vertex] (1) at  (0,1.3) {$s_1$};
\node[vertex] (2) at  (-0.7,0.8) {$s_2$};
\node[vertex] (3) at  (-0.5,0) {$s_3$};
\node[vertex] (4) at  (0.5,0) {$s_4$};
\node[vertex] (5) at  (0.7,0.8) {$s_5$};
\draw[edge] (1) to (2);
\draw[edge] (2) to (5);
\draw[edge] (3) to (4);
\draw[edge] (4) to (1);
\draw[edge] (5) to (3);
\node at (0,-0.2) {\tiny (e) $K=4$};
\node at (0,-0.45) {\tiny $\eta=0.1,\epsilon=0.02$};
\node at (0,-0.7) {\tiny$J(\mathcal{G}^*)=5$, CT=226s};
\end{scope}
\end{tikzpicture}
\begin{tikzpicture}[xscale=1.4,yscale=1.5]
\tikzset{vertex/.style = {shape=circle,draw,minimum size=1pt, inner sep=0pt}}
\begin{scope}[edge/.style = {->,> = latex',line width=0.7pt},node distance=2pt]
\node[vertex] (1) at  (0,1.3) {$s_1$};
\node[vertex] (2) at  (-0.7,0.8) {$s_2$};
\node[vertex] (3) at  (-0.5,0) {$s_3$};
\node[vertex] (4) at  (0.5,0) {$s_4$};
\node[vertex] (5) at  (0.7,0.8) {$s_5$};
\node at (0,-0.2) {\tiny (f) $K=6$};
\node at (0,-0.45) {\tiny $\eta=0.1,\epsilon=0.01$};
\node at (0,-0.7) {\tiny$J(\mathcal{G}^*)=0$, CT=1s};
\end{scope}
\end{tikzpicture}
\caption{Optimal communication graphs for different values of $K$ (complexity of the controller), $\eta$ (maximum magnitude of the allowed disturbances) and $\epsilon$ (the degree of dynamical couplings). ``CT" stands for computation time.}
\label{fig_various}
\end{figure}
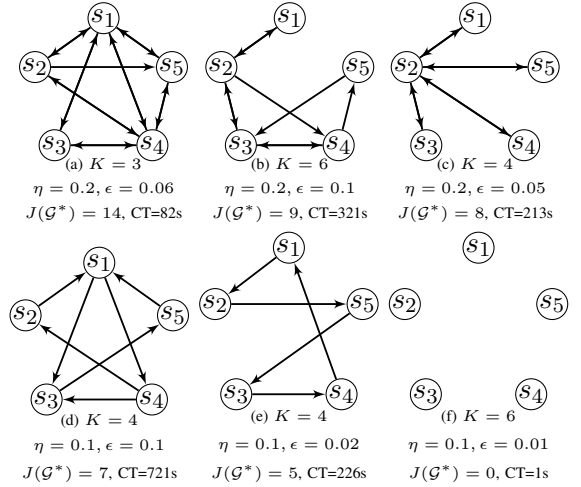

\begin{figure}
\vspace{1pt}
\centering
\includegraphics[width=0.22\textwidth]{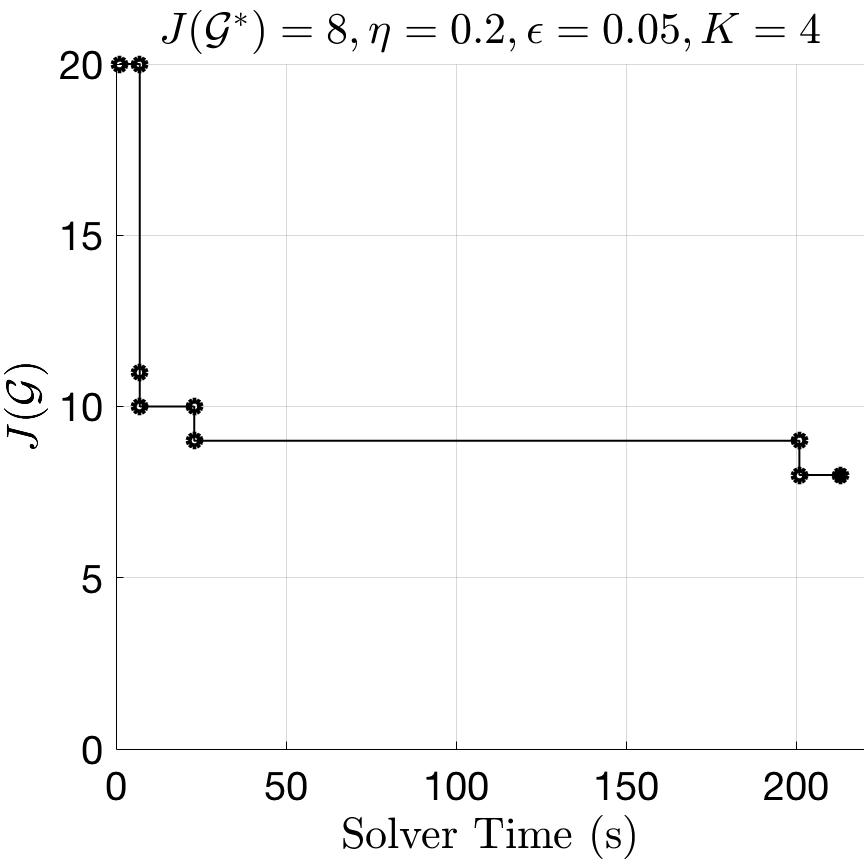}
\includegraphics[width=0.22\textwidth]{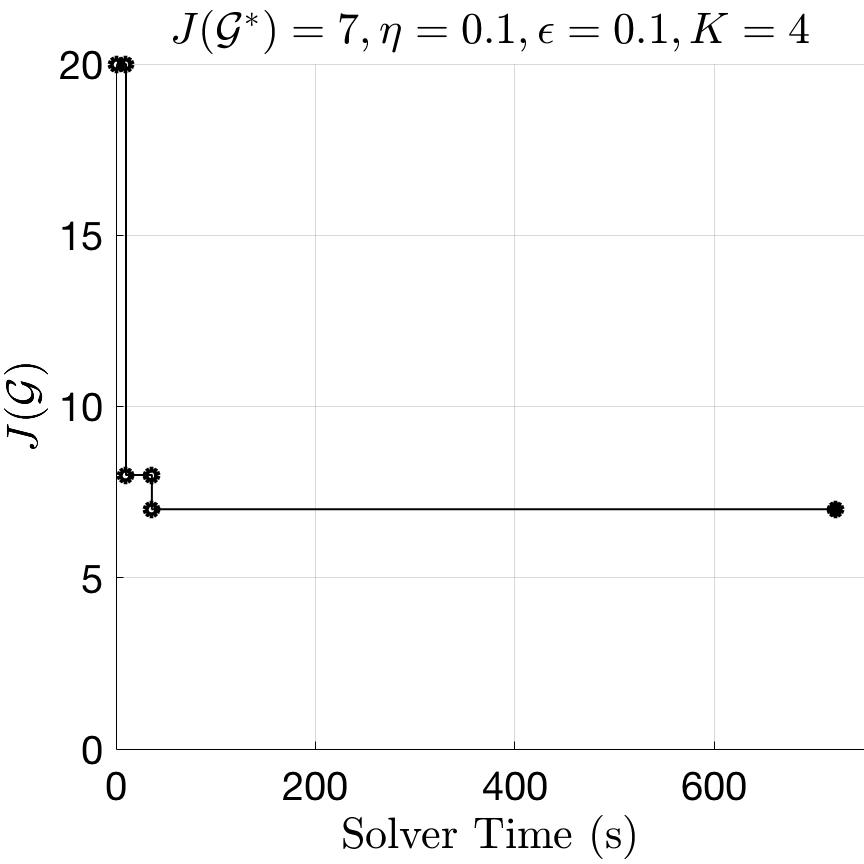}
\caption{The costs of incumbent feasible solutions versus time.}
\label{fig_solver}
\end{figure}

\subsection{Platooning}
We adopt a simplified version of the model in \cite{sadra2017platoon}. A platoon is a string of $N_p$ autonomous vehicles following a leader $l$. We have $N=N_p$ subsystems. System \eqref{eq_system} and sets \eqref{eq_sets} are constructed from what is described below. The state of each follower vehicle $s\in \mathcal{S}$ is $x_s=(d_s,v_s)$, where $d_s$ represents the distance from the preceding vehicle and $v_s$ is its velocity in the leader's frame. The evolution is given by:
\begin{equation}
\label{eq_platoon}
\begin{array}{l}
d_s[t+1]=d_s[t]-v_s[t]+v_{s'}[t]+\delta^x_s[t], \\
v_s[t+1]=v_s[t]+u_{s}[t]+\delta^v_s[t]+\delta^v_l[t], \\
\end{array}
\end{equation} 
where $s' \in \mathcal{S} \cup \{l\}$ is the preceding vehicle, $\delta^v_s[t] \in [-\varepsilon,\varepsilon], \delta^x_s[t] \in \frac{1}{10}[-\varepsilon,\varepsilon]$, are the disturbances hitting a follower vehicle, and $\delta^v_l[t] \in [-\varepsilon,\varepsilon]$ is the disturbance hitting the leader, which makes the frame non-inertial. We vary $\varepsilon$ in this example. Note that \eqref{eq_platoon} is a quite adversarial model since we consider independent disturbances affecting the distance evolution. The objective is to avoid rear-end avoid collisions for all times by writing $d_s[t] \ge -0.5, \forall t\ge 0,$ (a distance offset is performed in a way that $d_s<-0.5$ implies collision), and $\sum_\mathcal{S} d_s[t] \le \frac{1}{2} N_p, \forall t\in \mathbb{N}$ - the length of the platoon is always bounded. We also have bounded controls: $u_s[t] \in [-1,1], \forall s\in \mathcal{S}, \forall t\in \mathbb{N}$. System \eqref{eq_platoon} and specification are put into the form \eqref{eq_system}, and \eqref{eq_sets}, respectively. Note that the number of rows in $\mathbb{W}$ scale quadratically with the platoon size - $\mathbb{W}$ has a more complicated shape than a box \cite{sadra2017platoon}. 

\subsubsection{Structured Control}

We solve Problem 1. Consider a communication graph that every vehicle sends information to its follower (see Fig. \ref{fig_platoon} (b)). We set $\epsilon=0.05$. We observe that the minimum $K$ such that a feasible solution is found is $N_p+1$. The results are shown in Table. \ref{table_platoon_margin} for $K=N_p+1$. It is observed that the margin of correctness gradually decreases with the platoon size, highlighting the fundamental limits of predecessor following \cite{sabuau2017optimal}.   

\begin{table}[t]
\vspace{0.1in}
\centering
\caption{Margins of Correctness for Predecessor Following}
\resizebox{0.48\textwidth}{!}{%
\begin{tabular}{|c||c|c|c|c|c|c|c|c|c|}
\hline
$N_p$ & $3$ & $4$ & $5$ & $6$ & $8$ & $10$ & $12$ & $15$ \\
\hline
$\rho^*$ & $0.727$ & $0.726$ & $0.723$ & $0.721$ & $0.716$ & $0.710$ & $0.704$ & $0.697$ \\
\hline
CT(s) & $-$ & $-$ & $0.05$ & $0.1$ & $0.7$ & $3$ & $13$ & $133$ \\
\hline
\end{tabular}
}
\label{table_platoon_margin}
\end{table}

\subsubsection{Graph Design}
We solve Problem 2. 
We let $N=6$ and $c_{s_i,s_j}=(i-j)^2$ to penalize longer communication links. Some particular optimal graphs are shown in Fig. \ref{fig_platoon}. It is observed that for small disturbances, no communication is needed at all, but in order to attenuate heavier disturbances without violating collision and platoon length constraints, more communication links are required.   
 
\begin{figure}[t]
\centering
\vspace{5pt}
\begin{tikzpicture}[xscale=1.2,yscale=1]
\tikzset{vertex/.style = {minimum size=0pt}}
\begin{scope}[edge/.style = {->,> = latex',line width=0.7pt},node distance=1pt]
    \node[vertex](4) at (0,0) {\includegraphics[width=0.04\textwidth]{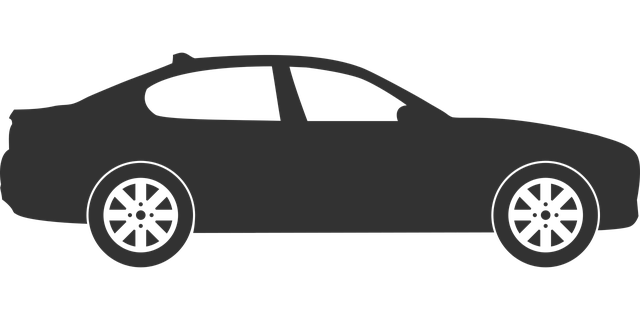}};
    \node[vertex](4) at (1,0) {\includegraphics[width=0.04\textwidth]{car}};
    \node[vertex](3) at (2,0) {\includegraphics[width=0.04\textwidth]{car}};
    \node[vertex](2) at (3,0) {\includegraphics[width=0.04\textwidth]{car}};
    \node[vertex](1) at (4,0) {\includegraphics[width=0.04\textwidth]{car}};
    \node[vertex](1) at (5,0) {\includegraphics[width=0.04\textwidth]{car}};
    \node[vertex](1) at (6,0) {\includegraphics[width=0.04\textwidth]{car}};
   
   	\node[] at (1,0.31) {\includegraphics[height=0.01\textwidth]{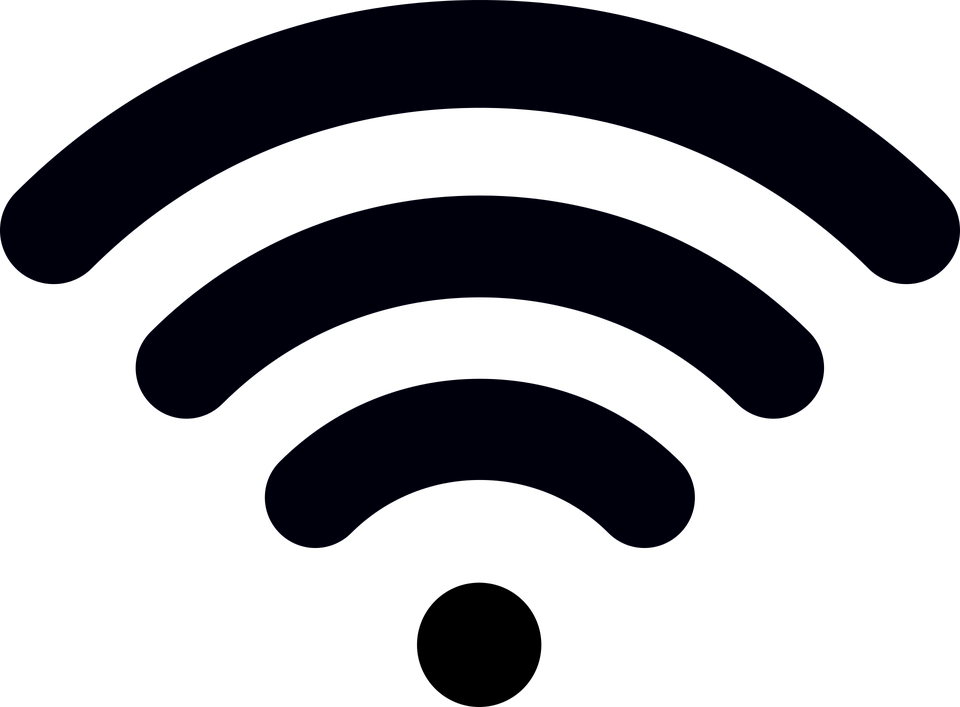}};
   	\node[] at (2,0.31) {\includegraphics[height=0.01\textwidth]{wifi}};
   	\node[] at (3,0.31) {\includegraphics[height=0.01\textwidth]{wifi}};
	\node[] at (0,0.31) {\includegraphics[height=0.01\textwidth]{wifi}};
   	\node[] at (5,0.31) {\includegraphics[height=0.01\textwidth]{wifi}};
   	\node[] at (4,0.31) {\includegraphics[height=0.01\textwidth]{wifi}};
   	\node[] at (6,0.45) {$l$};
    \draw[]  (-0.3, -0.11 ) -- (6.3, -0.11 );

\node[] at (3,-0.45) {\tiny (a) $\varepsilon=0.15,K=8, J(\mathcal{G}^*)=0$, CT=7s};
\end{scope}
\end{tikzpicture}
\begin{tikzpicture}[xscale=1.2,yscale=1]
\tikzset{vertex/.style = {minimum size=0pt}}
\begin{scope}[edge/.style = {->,> = latex',line width=0.7pt},node distance=1pt]
    \node[vertex](4) at (0,0) {\includegraphics[width=0.04\textwidth]{car}};
    \node[vertex](4) at (1,0) {\includegraphics[width=0.04\textwidth]{car}};
    \node[vertex](3) at (2,0) {\includegraphics[width=0.04\textwidth]{car}};
    \node[vertex](2) at (3,0) {\includegraphics[width=0.04\textwidth]{car}};
    \node[vertex](1) at (4,0) {\includegraphics[width=0.04\textwidth]{car}};
    \node[vertex](1) at (5,0) {\includegraphics[width=0.04\textwidth]{car}};
    \node[vertex](1) at (6,0) {\includegraphics[width=0.04\textwidth]{car}};
   
   	\node[] at (1,0.31) {\includegraphics[height=0.01\textwidth]{wifi}};
   	\node[] at (2,0.31) {\includegraphics[height=0.01\textwidth]{wifi}};
   	\node[] at (3,0.31) {\includegraphics[height=0.01\textwidth]{wifi}};
	\node[] at (0,0.31) {\includegraphics[height=0.01\textwidth]{wifi}};
   	\node[] at (5,0.31) {\includegraphics[height=0.01\textwidth]{wifi}};
   	\node[] at (4,0.31) {\includegraphics[height=0.01\textwidth]{wifi}};
   	\node[] at (6,0.45) {$l$};
    \draw[]  (-0.3, -0.11 ) -- (6.3, -0.11 );
\draw [thick,->] (0.9,0.4) to[->,in=45,out=135] (0.1,0.4);
\draw [thick,->] (2.9,0.4) to[->,in=45,out=135] (2.1,0.4);
\draw [thick,->] (1.9,0.4) to[->,in=45,out=135] (1.1,0.4);
\draw [thick,->] (3.9,0.4) to[->,in=45,out=135] (3.1,0.4);
\draw [thick,->] (4.9,0.4) to[->,in=45,out=135] (4.1,0.4);
\node[] at (3,-0.45) {\tiny (b) $\varepsilon=0.1800,K=8, J(\mathcal{G}^*)=5$, CT=81s};
\end{scope}
\end{tikzpicture}

\begin{tikzpicture}[xscale=1.2,yscale=1]
\tikzset{vertex/.style = {minimum size=0pt}}
\begin{scope}[edge/.style = {->,> = latex',line width=0.7pt},node distance=1pt]
    \node[vertex](4) at (0,0) {\includegraphics[width=0.04\textwidth]{car}};
    \node[vertex](4) at (1,0) {\includegraphics[width=0.04\textwidth]{car}};
    \node[vertex](3) at (2,0) {\includegraphics[width=0.04\textwidth]{car}};
    \node[vertex](2) at (3,0) {\includegraphics[width=0.04\textwidth]{car}};
    \node[vertex](1) at (4,0) {\includegraphics[width=0.04\textwidth]{car}};
    \node[vertex](1) at (5,0) {\includegraphics[width=0.04\textwidth]{car}};
    \node[vertex](1) at (6,0) {\includegraphics[width=0.04\textwidth]{car}};
   
   	\node[] at (1,0.31) {\includegraphics[height=0.01\textwidth]{wifi}};
   	\node[] at (2,0.31) {\includegraphics[height=0.01\textwidth]{wifi}};
   	\node[] at (3,0.31) {\includegraphics[height=0.01\textwidth]{wifi}};
	\node[] at (0,0.31) {\includegraphics[height=0.01\textwidth]{wifi}};
   	\node[] at (5,0.31) {\includegraphics[height=0.01\textwidth]{wifi}};
   	\node[] at (4,0.31) {\includegraphics[height=0.01\textwidth]{wifi}};
   	\node[] at (6,0.45) {$l$};
    \draw[]  (-0.3, -0.11 ) -- (6.3, -0.11 );
\draw [thick,->] (0.9,0.4) to[->,in=45,out=135] (0.1,0.4);
\draw [thick,->] (2.9,0.4) to[->,in=45,out=135] (2.1,0.4);
\draw [thick,->] (1.9,0.4) to[->,in=45,out=135] (1.1,0.4);
\draw [thick,->] (3.9,0.4) to[->,in=45,out=135] (3.1,0.4);
\draw [thick,->] (4.9,0.4) to[->,in=45,out=135] (4.1,0.4);

\draw [thick,->] (4.9,0.4) to[->,in=45,out=135] (3.1,0.4);


\draw [thick,->] (2.9,0.4) to[->,in=45,out=135] (1.1,0.4);
\draw [thick,->] (2.9,0.4) to[->,in=45,out=135] (0.1,0.4);

\draw [thick,->] (1.9,0.4) to[->,in=45,out=135] (0.1,0.4);

\node[] at (3,-0.45) {\tiny (c) $\varepsilon=0.1836,K=8, J(\mathcal{G}^*)=26$, CT=32s};
\end{scope}
\end{tikzpicture}
\caption{Optimal communication graphs for platooning. The leader is considered as an adversary with bounded acceleration range. }
\label{fig_platoon}
\end{figure}

\section{Concluding Remarks and Future Work}
We introduced a class of distributed control policies for networked linear systems subject to polytopic constraints and disturbances. We both explored designing optimal control policies and optimal communication graphs.
The key idea was taking the convex parameterization of RCI sets from \cite{rakovic2007optimized}, and transforming the memoryless piecewise affine laws to linear laws with memory. 
We applied our method to systems with couplings in both dynamics and constraints.
Future work will investigate the limits of our approach and exploring possibly more general classes of distributed set-invariance controllers.   

\section*{Acknowledgement}
We thank MirSaleh Bahavarnia from Lehigh University for fruitful discussions during preparation of this manuscript. 

\balance
\bibliography{ana_references}

\end{document}